\newtheorem{theorem}{Theorem}[section]
\newtheorem{proposition}[theorem]{Proposition}
\newtheorem{lemma}[theorem]{Lemma}
\newtheorem{corollary}[theorem]{Corollary}
\newtheorem{remark}[theorem]{Remark}
\begin{document}

\title{Perspectives and completely positive maps}
\author{Frank Hansen}
\date{August 7, 2016}

\maketitle

\begin{abstract} We study the filtering of the perspective of a regular operator map of several variables through a completely positive linear map.  By this method we are able to extend known operator inequalities of two variables to several variables; with applications in the theory of operator means of several variables. We also extend Lieb-Ruskai's convexity theorem from two to $ n+1 $ operator variables. 
\\[1ex]
{\bf MSC2010} classification: 47A63\\[1ex]
{\bf{Key words and phrases:}}  partial traces of operator means; Lieb-Ruskai's convexity theorem for several variables.
\end{abstract}

\section{Introduction} We study the filtering of a regular operator map through a completely positive linear map $ \Phi. $ A main result is the inequality
\[
 F\bigl(\Phi(A_1),\dots,\Phi(A_k)\bigl)\le\Phi\bigl(F(A_1,\dots,A_k)\bigr),
\]
where $ A_1,\dots,A_k $ are positive definite operators on a Hilbert space of finite dimensions, and $ F $ is a convex regular operator map of $ k $ variables. If  $ G_k $ denotes any of the various geometric means of $ k $ variables studied in the literature we obtain as a special case the inequality
\[
\Phi\bigl(G_k(A_1,\dots,A_k)\bigr)\le  G_k\bigl(\Phi(A_1),\dots,\Phi(A_k)\bigl).
\]
This inequality extends result in the literature for $ k=2, $ for geometric means of $ k $ variables that may be obtained inductively by the power mean of two variables, and for means that are limits of such means, including the Karcher mean \cite{kn:bhatia:2012}. 

We extend Lieb-Ruskai's convexity theorem from two to $ n+1 $ operator variables. For $ n=2 $ we obtain in particular that the map
\[
L(A,B,C)= C^* B^{-1/2}\bigl(B^{1/2}A^{-1}B^{1/2}\bigr)^{1/2}B^{-1/2} C
\]
is convex in arbitrary $ C $ and positive definite $ A $ and $ B. $ In addition, 
\[
L\bigl(\Phi(A),\Phi(B),\Phi(C)\bigr)\le \Phi\bigl(L(A,B,C)\bigr)
\]
for any completely positive linear map $ \Phi $ between operators acting on finite dimensional Hilbert spaces. In particular, this includes quantum channels and partial traces. For commuting $ A $ and $ B $ the generalised Lieb-Ruskai map reduces to
\[
L(A,B,C)=C^*A^{-1/2}B^{-1/2}C.
\]
In particular, $ L(A,A,C)=C^*A^{-1}C. $

\section{Preliminaries}

Let $ \mathcal D\subseteq B(\mathcal H)\times\cdots\times B(\mathcal H) $ be a convex domain, where $ B(\mathcal H) $ is the algebra of bounded linear operators on a Hilbert space $ \mathcal H. $ 

We defined \cite[Definition 2.1]{kn:hansen:2014:4} the notion of a regular map 
$ F\colon\mathcal D\to B(\mathcal H), $  generalising the notion of a spectral function of Davis for functions of one variable, the notion of a regular matrix map of two variables by the author \cite{kn:hansen:1983}, and the notion of a regular operator map of two variables  \cite[Definition 2.1]{kn:hansen:2014:2} by Effros and the author. Loosely speaking, a regular map is unitarily invariant and reduces block matrices in a simple and natural way. It retains regularity when compressed to a subspace.

Although we often restrict the study to finite dimensional spaces it is convenient to consider only such regular maps that may be defined also on an infinite dimensional Hilbert space $ \mathcal H. $ Since $ \mathcal H $  in this case is isomorphic to $ \mathcal H\oplus\mathcal H $ this allows us to use block matrix techniques without imposing dimension conditions. Furthermore, it implies that a regular map is well-defined regardless of the underlying Hilbert space. We may thus port a regular map unambiguously from one Hilbert space to another.
In this paper we only consider domains of the form,
\[
\mathcal D^k(\mathcal H)=\{(A_1,\dots,A_k)\mid A_1,\dots,A_k\ge 0\},
\]
of $ k $-tuples of positive semi-definite operators, or domains,
\[
\mathcal D_+^k(\mathcal H) = \{ (A_1,\dots,A_k)\mid A_1,\dots,A_k>0\},
\]
of $ k $-tuples of positive definite and invertible operators acting on a Hilbert space $ \mathcal H. $ The latter is the natural type of domain for perspectives.

\subsection{Jensen's inequality for regular operator maps}

The following result was proved for $ \mathcal H=\mathcal K $ in \cite[Theorem 2.2 (i)]{kn:hansen:2014:4}. It is just an exercise to generalise the statement and obtain the following:

\begin{lemma}\label{lemma: Jensens operator inequality}
Let $ F\colon\mathcal D^k(\mathcal H)\to B(\mathcal H)_\text{sa} $ be a convex regular map, and take a contraction $ C\colon\mathcal H\to\mathcal K $ of $ \mathcal H $ into a Hilbert space $ \mathcal K. $ If $ F(0,\dots,0)\le 0 $ then the inequality
\[
F(C^*A_1C,\dots,C^*A_kC)\le C^* F(A_1,\dots,A_k)C
\]
holds for $ k $-tuples $ (A_1,\dots,A_k) $ in $ \mathcal D^k(\mathcal K). $
\end{lemma}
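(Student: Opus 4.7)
The plan is to reduce the two-Hilbert-space statement to the single-space case proved in \cite[Theorem 2.2 (i)]{kn:hansen:2014:4} by passing to the direct sum $\mathcal{L} = \mathcal{K}\oplus\mathcal{H}$ and exploiting the fact that, as recalled in the preliminaries, a regular map may be ported unambiguously from one Hilbert space to another and decomposes over a block-matrix structure.

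First I would form the contraction $\tilde{C}\colon\mathcal{L}\to\mathcal{L}$ defined by the block matrix
\[
\tilde{C} = \begin{pmatrix} 0 & C \\ 0 & 0 \end{pmatrix},
\]
for which $\tilde{C}^*\tilde{C} = 0\oplus C^*C \le I_\mathcal{L}$, and embed the given operators as $\tilde{A}_i = A_i\oplus 0\in B(\mathcal{L})$. A short calculation gives $\tilde{C}^*\tilde{A}_i\tilde{C} = 0\oplus C^*A_iC$. Applying the single-space version of the inequality on $\mathcal{L}$ to $\tilde{C}$ and $(\tilde{A}_1,\dots,\tilde{A}_k)$ produces
\[
F(\tilde{C}^*\tilde{A}_1\tilde{C},\dots,\tilde{C}^*\tilde{A}_k\tilde{C}) \le \tilde{C}^* F(\tilde{A}_1,\dots,\tilde{A}_k)\tilde{C}.
\]
Next I would invoke the block-reduction property of a regular map to split both sides along $\mathcal{L} = \mathcal{K}\oplus\mathcal{H}$: the left-hand side becomes $F(0,\dots,0)\oplus F(C^*A_1C,\dots,C^*A_kC)$, while on the right, $F(\tilde{A}_1,\dots,\tilde{A}_k) = F(A_1,\dots,A_k)\oplus F(0,\dots,0)$ and sandwiching with $\tilde{C}$ collapses the direct sum to $0\oplus C^*F(A_1,\dots,A_k)C$.

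Comparing the two block-diagonal operators extracts the desired inequality from the second summand; the top-left block reproduces the standing hypothesis $F(0,\dots,0)\le 0$, which is precisely why that hypothesis is needed. The main point to verify, and the only step that is not mechanical, is that the regularity axioms genuinely deliver the direct-sum decomposition of $F$ across the two possibly non-isomorphic summands $\mathcal{K}$ and $\mathcal{H}$, and that the cited single-space result applies on $\mathcal{L}$ regardless of dimension. Both follow from the unitary invariance and block-reduction clauses of the definition of a regular map recalled in the preliminaries, vindicating the author's remark that the extension from $\mathcal{H} = \mathcal{K}$ is just an exercise.
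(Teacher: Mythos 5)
Your argument is correct and complete: the dilation to $\mathcal L=\mathcal K\oplus\mathcal H$ with $\tilde C=\bigl(\begin{smallmatrix}0&C\\0&0\end{smallmatrix}\bigr)$ and $\tilde A_i=A_i\oplus 0$, followed by the block-reduction property of regular maps, correctly reduces the two-space statement to the single-space case of \cite[Theorem 2.2 (i)]{kn:hansen:2014:4}, and the computations $\tilde C^*\tilde C=0\oplus C^*C$ and $\tilde C^*\tilde A_i\tilde C=0\oplus C^*A_iC$ check out. The paper itself omits the proof, calling it ``just an exercise,'' and your dilation argument is precisely the natural way to carry that exercise out; the only point worth making explicit is that unitary invariance forces $F(0,\dots,0)$ to be a scalar multiple of the identity, so the hypothesis $F(0,\dots,0)\le 0$ transfers unambiguously to the ported map on $\mathcal L$ before you invoke the single-space theorem there.
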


The next result reduces to \cite[Theorem 2.2 (ii)]{kn:hansen:2014:4} for $ \mathcal H=\mathcal K $ and $ n=2. $ Since the generalisation is quite straight-forward we leave the proof to the reader. 

\begin{theorem}[Jensen's inequality for regular operator maps]\label{Jensen's inequality for regular operator maps}\-\\
Let $ F\colon\mathcal D^k(\mathcal H)\to B(\mathcal H)_\text{sa} $ be a convex regular map
and let $ C_1,\dots,C_n\colon\mathcal H\to\mathcal K $ be mappings of $ \mathcal H $ into (possibly another) Hilbert space $ \mathcal K $ such that
\[
C_1^*C_1+\cdots+C_n^*C_n=1_{\mathcal H}.
\]
Then the inequality
\[
F\Bigl(\sum_{i=1}^n C_i^* A_{i1} C_i,\dots,\sum_{i=1}^n C_i^*A_{ik}C_i\Bigr)
\le\sum_{i=1}^n C_i^* F(A_{i1},\dots,A_{ik})C_i
\]
holds for $ k $-tuples $ (A_{i1},\dots,A_{ik}) $ in $ \mathcal D^k(\mathcal K) $ for $ i=1,\dots,n. $
\end{theorem}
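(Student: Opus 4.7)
The plan is to reduce this multi‑operator Jensen inequality to the single‑contraction case of Lemma~\ref{lemma: Jensens operator inequality} via the standard column‑isometry trick, combined with the block‑diagonal reduction property encoded in regularity.

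First I would assemble $ C_1,\dots,C_n $ into a single operator $ V\colon\mathcal H\to\mathcal K\oplus\cdots\oplus\mathcal K $ (with $ n $ copies of $ \mathcal K $) by setting $ V\xi=(C_1\xi,\dots,C_n\xi) $. The hypothesis $ \sum_i C_i^*C_i=1_{\mathcal H} $ translates exactly into $ V^*V=1_{\mathcal H} $, so $ V $ is an isometry and in particular a contraction in the sense required by the lemma. Next, for each $ j=1,\dots,k $ I would form the block‑diagonal operator $ \tilde A_j = A_{1j}\oplus\cdots\oplus A_{nj} $ on $ \mathcal K^{\oplus n} $, which is manifestly positive semi‑definite and satisfies
\[
V^*\tilde A_j V \;=\; \sum_{i=1}^n C_i^* A_{ij} C_i.
\]
Regularity of $ F $ then supplies the companion identity
\[
F(\tilde A_1,\dots,\tilde A_k) \;=\; \bigoplus_{i=1}^n F(A_{i1},\dots,A_{ik}),
\]
since a regular map is required to reduce block‑diagonal inputs block‑diagonally.

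With those two identities in hand, I would apply Lemma~\ref{lemma: Jensens operator inequality} to the contraction $ V $ and the tuple $ (\tilde A_1,\dots,\tilde A_k)\in\mathcal D^k(\mathcal K^{\oplus n}) $ to obtain
\[
F(V^*\tilde A_1 V,\dots,V^*\tilde A_k V)\;\le\; V^* F(\tilde A_1,\dots,\tilde A_k) V,
\]
which, after unpacking the two displayed identities, is precisely the stated inequality.

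The only delicate point is the hypothesis $ F(0,\dots,0)\le 0 $ required by Lemma~\ref{lemma: Jensens operator inequality} but absent from the present statement. Since $ V $ is an isometry rather than a proper contraction, that hypothesis is not actually needed here: one dilates $ V $ to a unitary on a suitably enlarged space and appeals to the unitary invariance built into regularity, so that no ``empty'' blocks ever need to be filled with zeros. Verifying that this dilation step is compatible with the block‑reduction property — equivalently, that the proof of Lemma~\ref{lemma: Jensens operator inequality} goes through for isometries without the $ F(0,\dots,0)\le 0 $ assumption — is the one nontrivial ingredient, and I expect it to be the step that warrants the most care; everything else is routine bookkeeping with block matrices.
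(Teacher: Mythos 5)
Your reduction to a single operator is the right first move, and both identities you use are correct: $V\xi=(C_1\xi,\dots,C_n\xi)$ gives an isometry $V\colon\mathcal H\to\mathcal K^{\oplus n}$ with $V^*\tilde A_jV=\sum_i C_i^*A_{ij}C_i$, and the block-diagonal reduction $F(\tilde A_1,\dots,\tilde A_k)=\bigoplus_i F(A_{i1},\dots,A_{ik})$ is exactly what regularity provides. The gap is the step you yourself flag and then wave at: Lemma~\ref{lemma: Jensens operator inequality} genuinely requires $F(0,\dots,0)\le 0$, the present theorem does not assume it, and the hypothesis cannot be dropped for general contractions (take $F(A)=A+1$, which is affine and regular with $F(0)=1$: for a strict contraction $C$ one gets $F(C^*AC)=C^*AC+1\ge C^*AC+C^*C=C^*F(A)C$, the reverse inequality). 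So you cannot simply cite the lemma, and your proposed fix --- dilate $V$ to a unitary and invoke unitary invariance --- does not close the argument as described: after conjugating the enlarged block-diagonal operator by a unitary dilation of $V$ you still have to pass to the $\mathcal H$--corner of the result, and that corner-extraction is precisely a compression, which reinstates the need for $F(0,\dots,0)\le 0$ (or forces you to pad with zero blocks).

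The correct mechanism, which is what the cited source \cite[Theorem 2.2 (ii)]{kn:hansen:2014:4} uses in the form of an average over unitaries, exploits that $V$ is an isometry so that no zero blocks ever enter $F$. Concretely: put $P=VV^*$ on $\mathcal L=\mathcal K^{\oplus n}$ and $U=2P-1$, a self-adjoint unitary. Then $\tfrac12\bigl(\tilde A_j+U\tilde A_jU\bigr)=P\tilde A_jP+(1-P)\tilde A_j(1-P)$ is block diagonal relative to $\mathcal L=P\mathcal L\oplus(1-P)\mathcal L$, so convexity plus unitary invariance give
\[
F\bigl(P\tilde A_1P+(1-P)\tilde A_1(1-P),\dots\bigr)\le\tfrac12\Bigl(F(\tilde A_1,\dots,\tilde A_k)+U\,F(\tilde A_1,\dots,\tilde A_k)\,U\Bigr),
\]
and compressing both sides with $P$ (using $PU=P$ on the right, and block-diagonal reduction on the left so that the $P$--block of the left-hand side is $F$ of the restrictions $P\tilde A_jP|_{P\mathcal L}$, with no zero block appearing) yields $F\bigl(P\tilde A_1P|_{P\mathcal L},\dots\bigr)\le P\,F(\tilde A_1,\dots,\tilde A_k)\,P|_{P\mathcal L}$. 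Transporting this back to $\mathcal H$ via the unitary $V\colon\mathcal H\to P\mathcal L$ gives $F(V^*\tilde A_1V,\dots,V^*\tilde A_kV)\le V^*F(\tilde A_1,\dots,\tilde A_k)V$, which combined with your two identities finishes the proof. The paper itself leaves the proof to the reader, so your outline is a reasonable skeleton, but as submitted it rests on an unproved isometry variant of the lemma whose justification is the entire mathematical content of the theorem.
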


\begin{corollary}\label{Jensen's inequality for completely positive unital maps}
Let $ \Phi\colon B(\mathcal H)\to B(\mathcal K) $ be a completely positive unital linear map between operators on Hilbert spaces of finite dimensions, and let
$ F $ be a convex regular map. Then
\[
F\bigl(\Phi(A_1),\dots,\Phi(A_k)\bigr)\le\Phi\bigl(F(A_1,\dots,A_k)\bigr)
\]
for $ (A_1,\dots,A_k)\in\mathcal D_k(\mathcal H). $
\end{corollary}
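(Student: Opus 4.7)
The plan is to deduce the corollary directly from Theorem \ref{Jensen's inequality for regular operator maps} by using a Kraus (Stinespring) decomposition of the completely positive unital map $\Phi$. Since $\mathcal H$ and $\mathcal K$ are finite-dimensional, there exist operators $V_1,\dots,V_n\colon\mathcal K\to\mathcal H$ such that
\[
\Phi(X)=\sum_{i=1}^n V_i^* X V_i,\qquad X\in B(\mathcal H),
\]
and the unitality $\Phi(1_{\mathcal H})=1_{\mathcal K}$ translates into the resolution of the identity
\[
\sum_{i=1}^n V_i^* V_i = 1_{\mathcal K}.
\]

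Next I would apply Theorem \ref{Jensen's inequality for regular operator maps} with $\mathcal H$ and $\mathcal K$ interchanged (noting that the regular map $F$ is, by the porting discussion, well-defined on tuples of positive semi-definite operators on either Hilbert space). Taking $C_i=V_i$ and choosing the tuples all equal, $(A_{i1},\dots,A_{ik})=(A_1,\dots,A_k)$ for every $i=1,\dots,n$, the theorem yields
\[
F\Bigl(\sum_{i=1}^n V_i^* A_1 V_i,\dots,\sum_{i=1}^n V_i^* A_k V_i\Bigr)\le\sum_{i=1}^n V_i^* F(A_1,\dots,A_k)V_i.
\]
By the Kraus representation of $\Phi$, the left-hand side equals $F\bigl(\Phi(A_1),\dots,\Phi(A_k)\bigr)$ and the right-hand side equals $\Phi\bigl(F(A_1,\dots,A_k)\bigr)$, which is precisely the desired inequality.

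There is no real obstacle beyond recognising that the hypothesis of Theorem \ref{Jensen's inequality for regular operator maps} is exactly encoded by the Kraus decomposition of a unital completely positive map between matrix algebras; the finite-dimensionality assumption is used only to guarantee the existence of this finite Kraus representation. If one wished to avoid any explicit mention of Kraus operators, an equivalent route is to invoke the Stinespring dilation $\Phi(X)=V^*\pi(X)V$ with $V^*V=1_{\mathcal K}$ and $\pi$ a $*$-homomorphism, and then apply Lemma \ref{lemma: Jensens operator inequality} to the contraction $V$ after noting that $\pi$ commutes with the regular map $F$; this produces the same estimate.
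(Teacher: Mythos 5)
Your proof is correct and is essentially identical to the paper's: the paper also writes $\Phi$ via its Choi/Kraus decomposition $\Phi(A)=\sum_{i=1}^n C_i^*AC_i$ with $\sum_i C_i^*C_i=1_{\mathcal K}$ and applies Theorem~\ref{Jensen's inequality for regular operator maps} with all $k$-tuples chosen equal to $(A_1,\dots,A_k)$. Your explicit remark about interchanging the roles of $\mathcal H$ and $\mathcal K$ (justified by the porting of regular maps) is a point the paper leaves implicit, but it is the same argument.
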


\begin{proof}
By Choi's decomposition theorem there exist operators $ C_1,\dots,C_n $ in $ B(\mathcal K,\mathcal H) $ with $ C_1^*C_1+\cdots+C_n^*C_n=1_{\mathcal K} $ such that
\[
\Phi(A)=\sum_{i=1}^n C_i^* A C_i\qquad\text{for}\qquad A\in B(\mathcal H).  
\]
The statement now follows by the preceding theorem by choosing
\[
(A_{i1},\dots,A_{ik})=(A_1,\dots,A_k)
\]
for $ i=1,\dots,n. $
\end{proof}

Davis \cite[Corollary]{kn:davis:1957} proved that $ f\bigl(\Phi(A)\bigr)\le \Phi\bigl(f(A)\bigr) $ for an operator convex function $ f $ with $ f(0)=0 $ and a completely positive linear map $ \Phi $ with $ \Phi(1)\le 1. $ Jensen's operator inequality is the slightly more general statement
\[
f\Bigl(\sum_{i=1}^n C_i^* A_i C_i\Bigr)\le\sum_{i=1}^n C_i^* f(A_i) C_i
\]
for tuples $ (A_1,\dots,A_n) $ and operators $ C_1,\dots,C_n $ with $ C_1^*C_1+\cdots+C_n^*C_n=1, $
see \cite[Theorem 2.1 (iii)]{kn:hansen:1982} and \cite{kn:hansen:2003:2}. Jensen's inequality for regular operator maps may in the same way be considered a generalisation of Corollary~\ref{Jensen's inequality for completely positive unital maps}.

\section{Perspectives}

We introduced the perspective \cite[Definition 3.1]{kn:hansen:2014:4} of a regular operator map of $ k $ variables as a generalisation of the operator perspective of a function of one variable defined by Effros \cite{kn:effros:2009:1}. A key result is that the perspective $ \mathcal P_F $ of a convex regular operator map $ F\colon\mathcal D_+^k(\mathcal H)\to B(\mathcal H) $ of $ k $ variables is a convex positively homogenous regular operator map of $ k+1 $ variables \cite[Theorem 3.2]{kn:hansen:2014:4}.

\begin{theorem}\label{theorem: filtering of perspective by a CP-map}
Let $ \Phi\colon B(\mathcal H)\to B(\mathcal K) $ be a completely positive linear map between operators on Hilbert spaces of finite dimensions, and let $ F\colon\mathcal D_+^k(\mathcal H)\to B(\mathcal H) $ be a convex regular map. Then
\[
\mathcal P_F\bigl(\Phi(A_1),\dots,\Phi(A_{k+1})\bigr)\le \Phi\bigl( \mathcal P_F(A_1,\dots,A_{k+1})\bigr),
\]
for operators $ (A_1,\dots,A_{k+1}) $ in $ \mathcal D^k_+(\mathcal H), $ where $ \mathcal P_F $ is the perspective of $ F.  $
\end{theorem}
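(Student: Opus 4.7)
The plan is to apply Corollary~\ref{Jensen's inequality for completely positive unital maps} not to $F$ itself but to its perspective $\mathcal P_F,$ which is a convex regular operator map of $k+1$ variables by \cite[Theorem 3.2]{kn:hansen:2014:4}. The only extra ingredient beyond the unital case of the corollary is the transformation rule for perspectives under invertible congruences, which bridges the gap between general and unital completely positive maps by exploiting the positive homogeneity of $\mathcal P_F.$

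First I would reduce to the case where $S=\Phi(1_{\mathcal H})$ is invertible. When it is not, one perturbs $\Phi$ by a small completely positive map such as $X\mapsto\epsilon\,\omega(X)\,1_{\mathcal K}$ for a faithful state $\omega$ on $B(\mathcal H),$ so that $\Phi_\epsilon(1_{\mathcal H})>0,$ applies the inequality to $\Phi_\epsilon,$ and then lets $\epsilon\downarrow 0$ using joint continuity of $\mathcal P_F$ in finite dimensions. Assuming $S$ invertible, set $\tilde\Phi(X)=S^{-1/2}\Phi(X)S^{-1/2};$ this is a unital completely positive linear map. Since $\tilde\Phi$ is unital it preserves positive invertibility, so Corollary~\ref{Jensen's inequality for completely positive unital maps} applies to $\mathcal P_F$ and yields
\[
\mathcal P_F\bigl(S^{-1/2}\Phi(A_1)S^{-1/2},\dots,S^{-1/2}\Phi(A_{k+1})S^{-1/2}\bigr)\le S^{-1/2}\Phi\bigl(\mathcal P_F(A_1,\dots,A_{k+1})\bigr)S^{-1/2}.
\]

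The final step is the congruence identity
\[
\mathcal P_F(C^*A_1C,\dots,C^*A_{k+1}C)=C^*\mathcal P_F(A_1,\dots,A_{k+1})C\qquad (C\text{ invertible}),
\]
which rewrites the left-hand side of the previous display as $S^{-1/2}\mathcal P_F(\Phi(A_1),\dots,\Phi(A_{k+1}))S^{-1/2};$ conjugating by $S^{1/2}$ then delivers the theorem. This congruence identity is the main point requiring care: it follows from the positive homogeneity of $\mathcal P_F$ combined with the unitary invariance built into the notion of a regular map, because when $C$ is invertible the operator $V=B^{1/2}C(C^*BC)^{-1/2}$ (with $B$ the scaling variable of the perspective) is unitary and conjugates the internal arguments of $F.$ The content is essentially already packaged into the behaviour of the perspective established in \cite[Theorem 3.2]{kn:hansen:2014:4}, and once invoked the rest of the proof is mechanical.
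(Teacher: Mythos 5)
Your proof is correct, but it takes a genuinely different route from the paper's. The paper follows Ando's original trick: it fixes the scaling variable $B=A_{k+1}$, forms the \emph{unital} completely positive map $\Psi(X)=\Phi(B)^{-1/2}\Phi(B^{1/2}XB^{1/2})\Phi(B)^{-1/2}$, applies Corollary~\ref{Jensen's inequality for completely positive unital maps} to the $k$-variable map $F$ at the points $B^{-1/2}A_iB^{-1/2}$, and then unwinds the definitions so that the perspective appears on both sides; it therefore uses nothing about $\mathcal P_F$ beyond its definition and the convexity of $F$. You instead normalize $\Phi$ once and for all to $\tilde\Phi=\Phi(1)^{-1/2}\Phi(\cdot)\Phi(1)^{-1/2}$, apply the corollary directly to the $(k+1)$-variable map $\mathcal P_F$ (legitimately, since it is convex and regular by the cited Theorem 3.2, and since a unital completely positive map keeps the arguments in $\mathcal D_+^{k+1}$), and undo the normalization by the congruence identity $\mathcal P_F(C^*A_1C,\dots)=C^*\mathcal P_F(A_1,\dots)C$. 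That identity is the one point that genuinely needs checking, and your verification is sound: with $V=B^{1/2}C(C^*BC)^{-1/2}$ one has $V^*V=1$, so $V$ is unitary for invertible $C$ in finite dimensions, the inner arguments of $F$ transform as $V^*(B^{-1/2}A_iB^{-1/2})V$, and unitary invariance (part of regularity) gives the claim; note this uses unitary invariance rather than positive homogeneity, and it is not literally stated in Theorem 3.2 of the reference, so it should be spelled out as you do. Your $\epsilon$-perturbation to handle non-invertible $\Phi(1)$ is fine and in fact slightly more careful than the paper, which tacitly assumes $\Phi(B)$ invertible (as the statement itself requires for $\mathcal P_F(\Phi(A_1),\dots,\Phi(A_{k+1}))$ to be defined). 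In short: the paper trades the congruence lemma for a $B$-dependent unital map; you trade the $B$-dependent map for the congruence lemma. Both are complete proofs.
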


\begin{proof} We extend an idea of Ando \cite{kn:ando:1979} from functions of one variable to regular operators maps.
To a fixed positive definite $ B\in B(\mathcal H) $ we set 
\[
\Psi(X)=\Phi(B)^{-1/2}\Phi(B^{1/2}XB^{1/2})\Phi(B)^{-1/2}
\]
and notice that $ \Psi\colon B(\mathcal H)\to B(\mathcal K) $ is a unital linear map. By the definition of complete positivity, we realise that also $ \Psi $ is completely positive. Since $ F $ is convex we may thus apply Corollary~\ref{Jensen's inequality for completely positive unital maps} and obtain
\[
\begin{array}{l}
F\bigl(\Psi(B^{-1/2}A_1B^{-1/2}),\dots,\Psi(B^{-1/2}A_kB^{-1/2})\bigr)\\[1.5ex]
\hskip 5em\le \Psi\bigl( F(B^{-1/2}A_1B^{-1/2},\dots,B^{-1/2}A_kB^{-1/2})\bigr).
\end{array}
\]
Inserting $ \Psi $ we obtain the inequality
\[
\begin{array}{l}
 F\bigl(\Phi(B)^{-1/2}\Phi (A_1)\Phi(B)^{-1/2},\dots, \Phi(B)^{-1/2}\Phi (A_k)\Phi(B)^{-1/2}\bigr)\\[1.5ex]
\le \Phi(B)^{-1/2}\Phi\bigl(B^{1/2}F(B^{-1/2}A_1B^{-1/2},\dots,B^{-1/2}A_kB^{-1/2})  B^{1/2}\bigr)\Phi(B)^{-1/2}.
\end{array}
\]
By multiplying from the left and from the right with $ \Phi(B)^{1/2} $ we obtain
\[
\begin{array}{l}
\mathcal P_F\bigl(\Phi(A_1),\dots,\Phi(A_k),\Phi(B)\bigr)=\\[1.5ex]
\Phi(B)^{1/2}  F\bigl(\Phi(B)^{-1/2}\Phi (A_1)\Phi(B)^{-1/2},\dots,\Phi(B)^{-1/2}\Phi (A_k)\Phi(B)^{-1/2}\bigr)\Phi(B)^{1/2}\\[1.5ex]
\le \Phi\bigl(B^{1/2}F(B^{-1/2}A_1B^{-1/2},\dots,B^{-1/2}A_kB^{-1/2})  B^{1/2}\bigr)\\[1.5ex]
=\Phi\bigl(\mathcal P_F(A_1,\dots,A_k,B)\bigr),
\end{array}
\]
which is the assertion.
\end{proof}

Notice that we do not require $ \Phi $ to be unital or trace preserving in the above theorem. 

\begin{theorem}\label{theorem: homogeneous and convex map under partial tracing}
Let $ \Phi\colon B(\mathcal H)\to B(\mathcal K) $ be a completely positive linear map between operators on Hilbert spaces of finite dimensions, and
let  $ F\colon\mathcal D_+^{k+1}(\mathcal H)\to B(\mathcal H) $ be a convex and positively homogeneous regular map. Then
\[
F\bigl(\Phi(A_1),\dots,\Phi(A_{k+1})\bigr)\le\Phi\bigl( F(A_1,\dots,A_{k+1})\bigr)
\]
for positive definite $ A_1,\dots,A_{k+1} \in B(\mathcal H). $ 
\end{theorem}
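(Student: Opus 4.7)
The plan is to reduce the statement to Theorem~\ref{theorem: filtering of perspective by a CP-map} by recognising $F$ as the perspective of a convex regular operator map of $k$ variables. Define
\[
G(X_1,\dots,X_k)=F(X_1,\dots,X_k,1_{\mathcal H}),\qquad (X_1,\dots,X_k)\in\mathcal D_+^k(\mathcal H).
\]
Then $G$ is convex, and it inherits regularity from $F$ since fixing the last argument to be the identity respects both the unitary invariance and the block-reduction properties defining regular maps.

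The central technical step is to verify the perspective identity
\[
F(A_1,\dots,A_k,B)=B^{1/2}G\bigl(B^{-1/2}A_1B^{-1/2},\dots,B^{-1/2}A_kB^{-1/2}\bigr)B^{1/2}=\mathcal P_G(A_1,\dots,A_k,B)
\]
for arbitrary positive definite $A_1,\dots,A_k,B$. For $B=\lambda 1_{\mathcal H}$ with $\lambda>0$ the identity is immediate from the scalar positive homogeneity of $F$. For general positive definite $B$ one must upgrade scalar homogeneity to the congruence covariance $F(B^{1/2}X_iB^{1/2},B)=B^{1/2}F(X_i,1)B^{1/2}$. The plan here is to pass to $\mathcal H\oplus\mathcal H$ and embed the congruence by $B^{1/2}$ and its inverse into a unitary block operator, so that the regular-map block reduction combined with the convex and positively homogeneous behaviour of $F$ forces the inequalities arising from convexity to be equalities. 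If, as is suggested by the paper's reliance on the perspective construction of \cite{kn:hansen:2014:4}, the representation of every convex, positively homogeneous, regular map as a perspective is available as a prior result, this step collapses to a citation.

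With $F=\mathcal P_G$ in hand, Theorem~\ref{theorem: filtering of perspective by a CP-map} applied to the convex regular $k$-variable map $G$ yields
\[
\mathcal P_G\bigl(\Phi(A_1),\dots,\Phi(A_{k+1})\bigr)\le \Phi\bigl(\mathcal P_G(A_1,\dots,A_{k+1})\bigr),
\]
which, via the identification $F=\mathcal P_G$, is precisely the asserted inequality. The main obstacle is therefore the perspective representation of $F$; once this representation is in place, no further use of the complete positivity of $\Phi$ is required, since all the interaction with $\Phi$ has been packaged into the preceding theorem.
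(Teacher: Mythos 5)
Your proposal is correct and takes essentially the same route as the paper: the paper simply cites \cite[Proposition 3.3]{kn:hansen:2014:4} for the fact that a convex, positively homogeneous regular map of $k+1$ variables is the perspective of its restriction $G(A_1,\dots,A_k)=F(A_1,\dots,A_k,1)$, and then applies Theorem~\ref{theorem: filtering of perspective by a CP-map}, exactly as you do. The congruence-covariance step you sketch is indeed packaged into that cited prior result, so your hedge is resolved in favour of the citation.
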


\begin{proof} We proved elsewhere \cite[Proposition 3.3]{kn:hansen:2014:4} that a convex and positively homogeneous regular mapping $ F $ of $ k+1 $ variables is the perspective of its restriction
\[
G(A_1,\dots,A_k)=F(A_1,\dots,A_k,1)
\]
to $ k $ variables. Since $ G\colon\mathcal D_+^k(\mathcal H)\to B(\mathcal H)  $ is convex and regular the assertion follows from Theorem~\ref{theorem: filtering of perspective by a CP-map}.
\end{proof}

\begin{remark}
A geometric mean $ G $ of several variables is an example of a concave positively homogeneous regular map. The inequality in Theorem~\ref{theorem: homogeneous and convex map under partial tracing} thus reduces to
\[
G\bigl(\Phi(A_1),\dots,\Phi(A_k)\bigr)\ge\Phi\bigl( G(A_1,\dots,A_k)\bigr).
\]
This result was proved \cite[Theorem 4.1]{kn:bhatia:2012} for all geometric means that may be obtained inductively by an application of the power mean of two variables. By a limiting argument this was then extended to the Karcher mean. However, there exist geometric means that cannot be obtained in this way, for example the means introduced in \cite[Section 4.2]{kn:hansen:2014:4}.
\end{remark}

\section{Lieb-Ruskai's convexity theorem}

Lieb and Ruskai~\cite[Theorem 1]{kn:lieb:1974} proved convexity of the map
\[
L(A,K)= K^*A^{-1} K
\]
in pairs $ (A,K) $ of bounded linear operators on a Hilbert space, where $ A $ is positive definite.
Subsequently, Ando gave a very elegant proof of this result \cite[Theorem 1]{kn:ando:1979}. If $ K $ is positive definite we may write
\[
KA^{-1}K=K^{1/2}\bigl(K^{-1/2}AK^{-1/2}\bigr)^{-1} K^{1/2}
\] 
as the perspective of the function $ t\to t^{-1}. $ Since this function is operator convex, we obtain convexity of the perspective $ L(A,K), $ if $ K $ is restricted to positive definite operators. This however is enough to obtain the general result. Indeed, the set of $ (K,A) $ where $ \|K\|<1 $ and $ A\ge 1 $ is convex, and the embedding
\begin{equation}\label{embedding into positive operators}
K\to
\begin{pmatrix}
A & K^*\\
K & A
\end{pmatrix}>0
\end{equation}
is affine into positive definite operators. It thus follows that
\[
\begin{array}{rl}
(K,A)&\to \begin{pmatrix}
A & K^*\\
K & A
\end{pmatrix}
\begin{pmatrix}
A & 0\\
0 & A
\end{pmatrix}^{-1}
\begin{pmatrix}
A & K^*\\
K & A
\end{pmatrix}\\[3ex]
&=\begin{pmatrix}
     A+K^*A^{-1}K & 2K^*\\
     2K & A+KA^{-1}K^*
     \end{pmatrix}
\end{array}
\]
is convex in the specified set. In particular, $ (K,A)\to K^*A^{-1}K $ is convex.

M.B. Ruskai kindly informed the author that Lieb and Ruskai obtained their much cited convexity result unaware that it was proved much earlier in another context by Kiefer \cite{kn:kiefer:1959}.

\begin{proposition}\label{Lieb-Ruskai's theorem and partial traces}
Let $ \Phi\colon B(\mathcal H)\to B(\mathcal K) $ be a completely positive linear map between operators on Hilbert spaces of finite dimensions. The inequality
\[
\Phi(K)^*\Phi(A)^{-1}\Phi(K)\le\Phi(K^*A^{-1}K)
\]
is valid for positive definite $ A $ and arbitrary $ K. $
\end{proposition}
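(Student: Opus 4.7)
The plan is first to prove the inequality when $K$ is itself positive definite and then to drop the positivity assumption on $K$ via the $2\times 2$ block--matrix embedding used in the discussion of Lieb--Ruskai's theorem preceding this proposition.

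In the positive definite case the Lieb--Ruskai map
\[
L(A,K)=KA^{-1}K=K^{1/2}\bigl(K^{-1/2}AK^{-1/2}\bigr)^{-1}K^{1/2}
\]
is the perspective of the operator convex function $f(t)=t^{-1}$ evaluated at $(A,K)$ and is therefore a convex positively homogeneous regular map on $\mathcal D_+^2(\mathcal H)$. Applying Theorem~\ref{theorem: homogeneous and convex map under partial tracing} with $k=1$ immediately delivers $\Phi(K)\Phi(A)^{-1}\Phi(K)\le\Phi(KA^{-1}K)$ whenever $A$ and $K$ are both positive definite.

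For arbitrary $K$ I would introduce the entry--wise ampliation $\tilde\Phi\colon M_2(B(\mathcal H))\to M_2(B(\mathcal K))$---which is completely positive because $\Phi$ is---together with the block matrices
\[
\tilde A=\begin{pmatrix} A & 0\\ 0 & A\end{pmatrix},\qquad
\tilde K_t=\begin{pmatrix} tA & K^*\\ K & tA\end{pmatrix}.
\]
Both are self-adjoint, and a Schur-complement calculation shows that $\tilde K_t$ is positive definite for every sufficiently large $t$; the special case already proved therefore applies to $(\tilde A,\tilde K_t)$ and $\tilde\Phi$. Extracting the $(1,1)$-block of the resulting inequality---and using that the diagonal blocks of a positive block matrix are themselves positive---one finds $t^2\Phi(A)+\Phi(K)^*\Phi(A)^{-1}\Phi(K)$ on the left and $t^2\Phi(A)+\Phi(K^*A^{-1}K)$ on the right, so that cancelling the common summand $t^2\Phi(A)$ yields the claim. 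The main obstacle is precisely this passage to arbitrary $K$: once the ampliation $\tilde\Phi$ is known to be completely positive and a $t$ has been exhibited for which $\tilde K_t$ is positive definite, the remainder is the same $2\times 2$ block-matrix computation already displayed in spirit in the introductory sketch of Lieb--Ruskai's theorem.
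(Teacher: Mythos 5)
Your proposal is correct and follows essentially the same route as the paper: first the positive definite case via the perspective/complete positivity theorem, then the passage to arbitrary $K$ by applying that case to $\Phi\otimes 1_2$ and a $2\times 2$ block matrix with $K$ in the off-diagonal corners. Your normalisation $\tilde K_t$ with $tA$ on the diagonal is only a cosmetic variant of the paper's choice $A\ge 1$, $\|K\|<1$, and the cancellation of the common diagonal term is the same final step.
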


\begin{proof}  If we restrict $ K $ to positive definite operators the first inequality is already contained in Theorem~\ref{theorem: filtering of perspective by a CP-map}. The same block matrix construction as in (\ref{embedding into positive operators}) applied to the completely positive linear map $ \Phi\otimes 1_2 $ then leads to the inequality
\[
\Phi(A)+\Phi(K^*)\Phi(A)^{-1}\Phi(K)\le\Phi(A+K^*A^{-1}K)
\]
for $ A\ge 1 $ and $ \|K\|< 1, $ and the statement follows.
\end{proof}

Notice that the above inequality was obtained in \cite[Corollary 3.1]{kn:ando:1979} if $ K $ is positive definite; cf. also \cite[Theorem 2 and Theorem 3]{kn:lieb:1974}.

There is another way to consider Lieb-Ruskai's convexity theorem which points to generalisations of the result to more than two operators. The geometric mean $ G_1 $ of one positive definite operator is trivially given by $ G_1(A)=A. $ It is a concave regular map and its inverse
\[
A\to G_1(A)^{-1}=A^{-1}
\]
is thus a convex regular map. The perspective
\[
\mathcal P_{G_1^{-1}}(A,B)=B^{1/2}G_1(B^{-1/2}AB^{-1/2})^{-1}B^{1/2}=BA^{-1}B=L(A,B)
\]
is therefore a convex regular map by \cite[Theorem 3.2]{kn:hansen:2014:4}, and it is increasing when filtered through a completely positive linear map by Theorem~\ref{theorem: filtering of perspective by a CP-map}. A similar construction may be carried out for any number of operator variables.

\begin{theorem}\label{theorem: convexity of L(A_1,...,A_n,C)}
Let $ G_n $ be an extension of the function
\[
(t_1,\dots,t_n)\to t_1^{1/n}\cdots t_n^{1/n}\qquad t_1,\dots,t_n>0
\]
to an operator map defined in positive definite invertible operators on a Hilbert space $ \mathcal H. $ Furthermore, suppose that  $ G_n $
is a positively homogeneous regular operator map which is concave, self-dual and congruence invariant, cf. the discussions in \cite{kn:ando:2004:1} and \cite{kn:hansen:2014:4}.
The operator map
\[
L(A_1,\dots,A_n,C)= C G_n(A_1,\dots,A_n)^{-1} C,
\]
is then convex in positive definite and invertible operators.
\end{theorem}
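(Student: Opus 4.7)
The plan is to express $L$ as the perspective of the map $H(A_1,\dots,A_n) := G_n(A_1,\dots,A_n)^{-1}$ and then invoke \cite[Theorem~3.2]{kn:hansen:2014:4}, which states that the perspective of a convex regular operator map of $n$ variables is a convex regular operator map of $n+1$ variables. The work therefore splits into two verifications: that $H$ is itself convex and regular, and that $\mathcal{P}_H$ coincides with $L$ on $\mathcal{D}_+^{n+1}(\mathcal{H})$.

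First I would check that $H$ is a convex regular map on $\mathcal{D}_+^n(\mathcal{H})$. Regularity transfers from $G_n$ to $H$ because operator inversion commutes with unitary conjugation and respects block-diagonal decompositions, so the unitary covariance and block-matrix reduction axioms survive. For convexity, set $X_i := \lambda A_i + (1-\lambda) B_i$ with $\lambda\in(0,1)$. Concavity of $G_n$ gives
\[
G_n(X_1,\dots,X_n) \ge \lambda G_n(A_1,\dots,A_n) + (1-\lambda) G_n(B_1,\dots,B_n),
\]
and taking operator inverses on both sides (which reverses the inequality on positive operators) followed by the operator convexity of $t\mapsto t^{-1}$ yields
\[
G_n(X_1,\dots,X_n)^{-1} \le \lambda G_n(A_1,\dots,A_n)^{-1} + (1-\lambda) G_n(B_1,\dots,B_n)^{-1},
\]
which is exactly the convexity of $H$.

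Next I would identify $\mathcal{P}_H$ with $L$. The congruence invariance of $G_n$, applied with the positive definite congruence $C^{-1/2}$, gives
\[
G_n(C^{-1/2}A_1 C^{-1/2},\dots,C^{-1/2}A_n C^{-1/2}) = C^{-1/2}G_n(A_1,\dots,A_n) C^{-1/2},
\]
and hence
\[
\mathcal{P}_H(A_1,\dots,A_n,C) = C^{1/2}\bigl[C^{-1/2}G_n(A_1,\dots,A_n) C^{-1/2}\bigr]^{-1}C^{1/2} = C G_n(A_1,\dots,A_n)^{-1} C.
\]
This is exactly $L(A_1,\dots,A_n,C)$, and convexity then follows from the cited perspective theorem.

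The main obstacle is the convexity of $H$. The concavity-plus-antimonotonicity step alone only produces a harmonic-mean-type bound involving $\bigl[\lambda G_n(\cdots)+(1-\lambda)G_n(\cdots)\bigr]^{-1}$ on the right, which is in general strictly smaller than a convex combination of the individual inverses; it is precisely the additional use of the operator convexity of $t \mapsto t^{-1}$ that converts this intermediate bound into the required convex inequality. Once $H$ has been shown to be convex and regular, the remaining work is a bookkeeping exercise, and the conclusion is delivered by the perspective theorem quoted above.
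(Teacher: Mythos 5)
Your proposal is correct and follows essentially the same route as the paper: write $L$ as the perspective of $G_n(\cdot)^{-1}$, verify that this inverse is a convex regular map via concavity of $G_n$, antimonotonicity of inversion and operator convexity of $t\mapsto t^{-1}$, and conclude with the perspective theorem of \cite{kn:hansen:2014:4}. The only cosmetic difference is that you identify $\mathcal P_H$ with $L$ by applying congruence invariance directly with $C^{-1/2}$, whereas the paper first uses self-duality and then congruence invariance with $C^{1/2}$; you also spell out the convexity of $G_n(\cdot)^{-1}$ in more detail than the paper does.
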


\begin{proof}
The (geometric) mean $ G_n $ is a positive, concave, and regular map. The inverse
\[
G_n(A_1,\cdots,A_n)^{-1}=G_n(A^{-1},\dots,A_n^{-1})
\]
is therefore convex and regular.  The perspective
\[
\begin{array}{l}
\mathcal P_{G_n^{-1}}(A_1,\dots,A_n,C)\\[1.5ex]
=C^{1/2} G_n\bigl(C^{-1/2}A_1C^{-1/2},\dots, C^{-1/2}A_n C^{-1/2}\bigr)^{-1} C^{1/2}\\[1.5ex]
=C^{1/2} G_n\bigl(C^{1/2}A_1^{-1}C^{1/2},\dots, C^{1/2}A_n^{-1}C^{1/2}\bigr) C^{1/2}\\[1.5ex]
=C G_n\bigl(A_1^{-1},\dots, A_n^{-1}\bigr) C= C G_n(A_1,\dots,A_n)^{-1} C\\[1.5ex]
=L(A_1,\dots,A_n,C),
\end{array}
\]
where we used self-duality and congruence invariance of the geometric mean. It now follows, by \cite[Theorem 3.2]{kn:hansen:2014:4}, that $ L $ is a convex regular map. 
\end{proof}

\begin{remark}
It is interesting to notice that Theorem~\ref{theorem: convexity of L(A_1,...,A_n,C)} alternatively may be obtained by adapting the arguments of Ando in \cite[Theorem 1]{kn:ando:1979}, and that this way of reasoning even imparts convexity of the map
\[
L(A,B,C)=C^* G_2(A,B)^{-1} C,
\]
where $ C $ now is arbitrary and $ A,B $ are positive definite and invertible. The argument uses the well-known fact that a block matrix of the form
\[
\begin{pmatrix}
A & C\\
C^* & B
\end{pmatrix},
\]
where $ A $ is positive definite and invertible, is positive semi-definite if and only if $ B\ge C^*A^{-1}C. $ 
Indeed, by taking   $ \lambda\in[0,1] $ and setting
\[
\begin{array}{rl}
C&=\lambda C_1+(1-\lambda)C_2\\[1ex]
T&= \lambda C_1^* G_2(A_1,B_1)^{-1}C_1+(1-\lambda)C_2^*G_2(A_2,B_2)^{-1}C_2
\end{array}
\]
we obtain the equality
\[
\begin{array}{l}
X=\begin{pmatrix}
\lambda G_2(A_1,B_1)+(1-\lambda)G_2(A_2,B_2) & C\\
C^* & T
\end{pmatrix}\\[4ex]
=\lambda\begin{pmatrix}
                G_2(A_1,B_1) & C_1\\
                C_1^*                 & C_1^* G_2(A_1,B_1)^{-1}C_1
                \end{pmatrix}\\[3ex]
\hskip 10em+(1-\lambda)\begin{pmatrix}
                G_2(A_2,B_2) & C_2\\
                C_2^*                 & C_2^* G_2(A_2,B_2)^{-1}C_2
                \end{pmatrix}.
\end{array}
\]
Since the two last block matrices by construction are positive semi-definite, we obtain that the block matrix $ X $ is positive semi-definite. Therefore,
\[
T\ge C^*\bigl(\lambda G_2(A_1,B_1)+(1-\lambda)G_2(A_2,B_2)\bigr)^{-1} C.
\]
We thus obtain
\[
\begin{array}{l}
\lambda L(A_1,B_1,C_1)+(1-\lambda) L(A_2,B_2,C_2)\\[1.5ex]
=\lambda C_1^* G_2(A_1,B_1)^{-1}C_1+(1-\lambda)C_2^*G_2(A_2,B_2)^{-1}C_2=T\\[1.5ex]
\ge C^*\bigl(\lambda G_2(A_1,B_1)+(1-\lambda)G_2(A_2,B_2)\bigr)^{-1} C\\[1.5ex]
\ge C^* G_2(\lambda A_1+(1-\lambda) A_2, \lambda B_1+(1-\lambda) B_2)^{-1}C\\[1.5ex]
=L(\lambda A_1+(1-\lambda)A_2, \lambda B_1+(1-\lambda B_2), \lambda C_1+(1-\lambda C_2),
\end{array}
\]
where we in the last inequality used concavity of the geometric mean and operator convexity of the inverse function.
\end{remark}

It seems mysterious that we in the last proof only used concavity of $ G_2 $ while we in Theorem~\ref{theorem: convexity of L(A_1,...,A_n,C)} used self-duality and congruence invariance in addition. However, if we want $ L(A,B,C) $ to be positively homogeneous, then $ G_2 $ must have the same property; and if we also want $ G_2 $ to be an extension of the geometric mean of positive numbers, then the geometric mean of two operators is the only solution satisfying all these requirements, cf. \cite[Proposition 3.3]{kn:hansen:2014:4}.
This way of reasoning extends to any number of variables and we obtain:

\begin{corollary} Let $ G_n $ be any geometric mean of $ n $ positive semi-definite and invertible operators. The operator function
\begin{equation}\label{convexity of L in the general case}
L(A_1,\dots,A_n,C)=C^* G_n(A_1,\dots,A_n)^{-1} C
\end{equation}
is convex in arbitrary $ C $ and positive definite and invertible $ A_1,\dots,A_n $ acting on a Hilbert space.
\end{corollary}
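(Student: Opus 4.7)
The plan is to mimic the block-matrix argument given in the preceding remark verbatim, extending it from $n=2$ to arbitrary $n$. Theorem~\ref{theorem: convexity of L(A_1,...,A_n,C)} already gives convexity when $C$ is restricted to be positive definite; the remaining task is to drop the positivity assumption on $C$ by using a Schur complement trick. The only algebraic facts needed are concavity of $G_n$, the Schur complement criterion that for $P>0$ the block matrix $\bigl(\begin{smallmatrix} P & Q \\ Q^* & R \end{smallmatrix}\bigr)$ is positive semi-definite if and only if $R\ge Q^*P^{-1}Q$, and operator antitonicity of inversion on positive invertible operators.

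Concretely, I would fix $\lambda\in[0,1]$, positive definite tuples $(A_{1j},\dots,A_{nj})$ for $j=1,2$, and arbitrary operators $C_1,C_2$. Set $C=\lambda C_1+(1-\lambda)C_2$ and
\[
T=\lambda C_1^*G_n(A_{11},\dots,A_{n1})^{-1}C_1+(1-\lambda)C_2^*G_n(A_{12},\dots,A_{n2})^{-1}C_2.
\]
The key step is then to write the block matrix
\[
X=\begin{pmatrix} \lambda G_n(A_{11},\dots,A_{n1})+(1-\lambda)G_n(A_{12},\dots,A_{n2}) & C \\ C^* & T \end{pmatrix}
\]
as the convex combination of the two positive semi-definite blocks
\[
\begin{pmatrix} G_n(A_{1j},\dots,A_{nj}) & C_j \\ C_j^* & C_j^*G_n(A_{1j},\dots,A_{nj})^{-1}C_j \end{pmatrix},\qquad j=1,2,
\]
each of which is positive semi-definite because $G_n(A_{1j},\dots,A_{nj})>0$ and the Schur complement identifies its $(2,2)$-block with $C_j^*G_n(A_{1j},\dots,A_{nj})^{-1}C_j$. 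Hence $X\ge 0$, and the reverse direction of the Schur complement criterion gives
\[
T\ge C^*\bigl(\lambda G_n(A_{11},\dots,A_{n1})+(1-\lambda)G_n(A_{12},\dots,A_{n2})\bigr)^{-1}C.
\]

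To finish I would combine concavity of $G_n$ with operator antitonicity of inversion: concavity gives
\[
\lambda G_n(A_{11},\dots,A_{n1})+(1-\lambda)G_n(A_{12},\dots,A_{n2})\le G_n\bigl(\lambda A_{11}+(1-\lambda)A_{12},\dots,\lambda A_{n1}+(1-\lambda)A_{n2}\bigr),
\]
and inverting this reverses the inequality. Chaining the two estimates, the right-hand side becomes $L\bigl(\lambda A_{11}+(1-\lambda)A_{12},\dots,\lambda A_{n1}+(1-\lambda)A_{n2},C\bigr)$, while the left-hand side is by construction $\lambda L(A_{11},\dots,A_{n1},C_1)+(1-\lambda)L(A_{12},\dots,A_{n2},C_2)$; this is exactly joint convexity of $L$. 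I do not foresee a real obstacle, since the argument is structurally identical to the $n=2$ case spelled out in the remark and uses none of the additional properties (self-duality, congruence invariance) that entered the perspective-based proof of Theorem~\ref{theorem: convexity of L(A_1,...,A_n,C)}; concavity of $G_n$ alone, together with strict positivity on positive definite inputs, suffices.
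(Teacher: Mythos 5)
Your argument is correct and is essentially the paper's own: the corollary is stated immediately after the remark carrying out exactly this Schur-complement/block-matrix argument for $n=2$, with the observation that ``this way of reasoning extends to any number of variables,'' and your proposal simply makes that extension explicit. The only ingredients you use --- positive definiteness of $G_n$ on positive definite tuples, the Schur complement criterion, concavity of $G_n$, and antitonicity of inversion --- are precisely those the paper relies on.
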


\begin{corollary}
Let $ \Phi\colon B(\mathcal H)\to B(\mathcal K) $ be a completely positive linear map between operators on Hilbert spaces of finite dimensions. The inequality
\[
L\bigl(\Phi(C),\Phi(A_1),\dots,\Phi(A_n)\bigr)\le\Phi\bigl(L(C,A_1,\dots,A_n)\bigr)
\]
is valid for positive definite $ A_1,\dots,A_n $ and $ C. $
\end{corollary}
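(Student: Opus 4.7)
The plan is to verify that the map $L$ from the preceding Corollary satisfies the hypotheses of Theorem~\ref{theorem: homogeneous and convex map under partial tracing}, viewed as an operator map of $n+1$ positive definite variables, and then invoke that theorem. Three properties are needed: convexity, regularity, and positive homogeneity on $\mathcal D_+^{n+1}(\mathcal H)$.

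All three are already available from the constructions preceding the Corollary. Convexity is the content of the immediately preceding Corollary, restricted to the subdomain where $C$ is positive definite. Regularity was established in Theorem~\ref{theorem: convexity of L(A_1,...,A_n,C)}, where $L$ is identified with the perspective $\mathcal P_{G_n^{-1}}$ of the convex regular map $G_n^{-1}$; by \cite[Theorem~3.2]{kn:hansen:2014:4}, the perspective of a convex regular map is automatically a convex, positively homogeneous regular map of one more variable. Positive homogeneity of $L$ therefore comes along for free, but may also be checked directly from the positive homogeneity of $G_n$: for $\lambda>0$,
\[
L(\lambda A_1,\ldots,\lambda A_n,\lambda C)
=(\lambda C)^*\bigl(\lambda G_n(A_1,\ldots,A_n)\bigr)^{-1}(\lambda C)
=\lambda\, L(A_1,\ldots,A_n,C).
\]

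With convexity, positive homogeneity, and regularity of $L$ all in place, Theorem~\ref{theorem: homogeneous and convex map under partial tracing} applies to $L$ as a map of $n+1$ variables and yields
\[
L\bigl(\Phi(A_1),\ldots,\Phi(A_n),\Phi(C)\bigr)\le \Phi\bigl(L(A_1,\ldots,A_n,C)\bigr)
\]
for positive definite $A_1,\ldots,A_n,C$; moving $\Phi(C)$ to the first argument slot, as in the statement of the Corollary, is a purely notational change. I do not anticipate any substantive obstacle: the conceptual work has already been carried out upstream in identifying $L$ as a perspective and in deriving the CP-map filtering inequality (Theorems~\ref{theorem: filtering of perspective by a CP-map} and~\ref{theorem: homogeneous and convex map under partial tracing}). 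The only point that really needs checking is the homogeneity identity above.
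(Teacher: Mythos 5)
Your proposal is correct and follows the route the paper intends: the corollary is stated without proof precisely because $L=\mathcal P_{G_n^{-1}}$ has already been identified as a convex, positively homogeneous regular map in Theorem~\ref{theorem: convexity of L(A_1,...,A_n,C)}, so the filtering inequality is immediate from Theorem~\ref{theorem: filtering of perspective by a CP-map} (equivalently Theorem~\ref{theorem: homogeneous and convex map under partial tracing}), exactly as you argue. The reordering of the arguments is, as you say, purely notational.
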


It is known that the geometric mean of two variables is the unique extension of the function $ (t,s)\to t^{1/2} s^{1/2} $ to a positively homogeneous, regular and concave operator map \cite{kn:hansen:2014:2}. Therefore,
\[
L(A,B,C)= C B^{-1/2}\bigl(B^{1/2}A^{-1}B^{1/2}\bigr)^{1/2}B^{-1/2} C
\]
is the only sensible extension of Lieb-Ruskai's map to three positive definite and invertible operators with symmetry condition $ L(A,B,C)=L(B,A,B). $ Without the symmetry condition there are other solutions. The weighted geometric mean,
\[
G_2(\alpha; A,B)=B^{1/2}\bigl(B^{-1/2}AB^{-1/2}\bigr)^\alpha B^{1/2}\qquad 0\le\alpha\le 1,
\]
is the perspective of the operator concave function $ t\to t^\alpha $ and is therefore concave and congruent invariant \cite{kn:hansen:1983,kn:hansen:2014:2}. It is also manifestly self-dual. We can therefore apply a proof similar to the one used in the preceding theorem and obtain that the map
\[
L(\alpha; A,B,C)=C B^{-1/2}\bigl(B^{1/2}A^{-1}B^{/12}\bigr)^\alpha B^{-1/2} C
\]
is convex in positive semi-definite and invertible operators. Furthermore, it is positively homogeneous and therefore increasing when filtered through a completely positive linear map between operators on finite dimensional Hilbert spaces. It reduces to
\[
L(\alpha; A,B,C)=CA^{-\alpha} B^{-(1-\alpha)} C
\]
for commuting $ A $ and $ B. $

It is known that there for $ n\ge 3 $ exist many different extensions of the real function $ (t_1,\dots,t_n)\to t_1^{1/n}\cdots t_n^{1/n} $ to an operator mapping $ G_n $ satisfying the conditions in the preceding theorem, cf. \cite{kn:hansen:2014:4}.
Notice that if $ A_1,\dots,A_n $ commute then
\[
L(A_1,\dots,A_n,C)=C^* A_1^{-1/n}\cdots A_n^{-1/n}C
\]
and in particular $ L(A,\dots,A,C)=C^*A^{-1}C. $\\[1ex] 
{\small
{\bf Acknowledgments}
This work was supported by the Japanese Grant-in-Aid for scientific research 26400104.


\vfill

\noindent Frank Hansen: Institute for Excellence in Higher Education, Tohoku University, Sendai, Japan.\\
Email: frank.hansen@m.tohoku.ac.jp.
      }

\end{document}